\newtheorem{theorem}{Theorem}
\newtheorem{proposition}[theorem]{Proposition}
\theoremstyle{definition}
\newtheorem{example}[theorem]{Example}
\newtheorem{remark}[theorem]{Remark}
\newcommand{\cd}{\stackrel{\mathcal{D}}{\longrightarrow}}
\newcommand{\Ex}{\mathbb{E}} 
\newcommand{\Var}{\operatorname{Var}} 
\newcommand{\Cov}{\operatorname{Cov}} 
\newcommand{\ind}{\mathbbm{1}} 
\begin{document}

\title{A gamma tail statistic and its asymptotics}

\author{Toshiya Iwashita$^{1}$\footnote{iwashita\_toshiya@rs.tus.ac.jp} {} and Bernhard Klar\,$^{2}$\footnote{ bernhard.klar@kit.edu, corresponding author} \\
    \small{$^{1}$ Institute of Arts and Sciences, Tokyo University of Science} \\
    \small{$^{2}$ Institute of Stochastics, Karlsruhe Institute of Technology (KIT)}}

\date{\today }
\maketitle

\begin{abstract}
Asmussen and Lehtomaa [Distinguishing log-concavity from heavy tails. Risks 5(10), 2017] introduced an interesting function $g$ which is able to distinguish between log-convex and log-concave tail behaviour of distributions, and proposed a randomized estimator for $g$.
In this paper, we show that $g$ can also be seen as a tool to detect gamma distributions or distributions with gamma tail. We construct a more efficient estimator $\hat{g}_n$ based on $U$-statistics, propose several estimators of the (asymptotic) variance of $\hat{g}_n$, and study their performance by simulations. Finally, the methods are applied to several data sets of daily precipitation.
\end{abstract}

{\bfseries Keywords:} Gamma distribution, $U$-statistics, Tail plot, Asymptotic relativ efficiency.

\section{Introduction} \label{section1}

Throughout the paper, we consider independent and identically distributed (i.i.d.) random variables $X,X_1,X_2,\ldots>0$ with common distribution function $F$ having density $f$.
\cite{AL:2017} introduced the function $g:(0,\infty)\to [0,1]$, defined by
\[
g_X(d)= g(d)= \Ex \left[ \frac{|X_1-X_2|}{X_1+X_2} \Big |X_1+X_2>d \right].
\]
To start with, note that the function $g$ satisfies $g_{aX}(d)=g_X(d/a)$ for $a>0$. Hence, a rescaling of $X$ does not change the qualitative behaviour of $g$.
The function has the following interpretation \citep{AL:2017}: If both $X_1$ and $X_2$ contribute equally to the sum $X_1+X_2$, then $g$ should eventually obtain values close to 0; if only one of the variables tends to be of the same magnitude as the whole sum, then $g$ is close to 1 for large $d$. More formally, they showed that $g(d)\to 1$ for $d\to\infty$ for many distributions with long tails, e.g. for lognormal type distributions, for Weibull distributions with shape parameter $\alpha<1$, and regularly varying distributions $RV(\alpha)$ with $\alpha>1$ and eventually decreasing density $f$. Here, a property holds {\em eventually}, if there exists $x_0$ so that the property holds in the set $[x_0,\infty)$. Further literature related to the single-big-jump principle is \cite{BB:2015} and \cite{LE:2015}.

A density $f$ is called {\em log-concave}, if $f(x)=e^{\phi(x)}$, where $\phi$ is a concave function. If $\phi$ is convex, then $f$ is log-convex. \cite{AL:2017} proved the following result. Assume that the density $f$ is  twice differentiable and eventually log-concave. Then,
\[
\limsup_{d\to\infty} g(d) \leq 1/2.
\]
Similarly, if $f$ is eventually log-convex, then
$\liminf_{d\to\infty} g(d) \geq 1/2.$
Moreover, the proof of Theorem 1 in \cite{AL:2017} shows that $g(d)\leq 1/2$ for all $d>0$, if $f$ is log-concave and twice differentiable. If $f$ is log-convex, $g(d)\geq 1/2$ for all $d>0$.
Since the exponential distribution is log-concave and log-convex, it follows that $g(d)=1/2$ for all $d>0$ under exponentiality.

A gamma distribution with density $f(x)=\beta^{\alpha} x^{\alpha-1}\exp(-\beta x)/\Gamma(\alpha)$, where shape parameter $\alpha$ and rate $\beta$ (or scale parameter $1/\beta$) are positive, is log-concave for $\alpha\geq 1$. Hence $g(d)\leq 1/2$ for all $d>0$. Similarly, for $\alpha\leq 1$, it is log-convex, and we have $g(d)\geq 1/2$ for $d>0$. Our first result in Sec. \ref{sec2} shows that $g(d)$ takes a constant value for gamma distributions; moreover, the family of gamma distributions is characterized by this property. Hence, $g(d)$ can also be seen as a tool to detect gamma distributions or distributions with gamma tail.

In Sec. \ref{sec3}, we first analyze the asymptotic behaviour of a randomized estimator of $g(d)$ introduced by \cite{AL:2017}, and construct a more efficient estimator based on $U$-statistics, denoted by $\hat{g}_n(d)$. In Sec. \ref{sec4} and \ref{sec5}, we propose several estimators of the (asymptotic) variance of $\hat{g}_n(d)$ and study their performance by simulations.
Finally, in Sec. \ref{sec6}, the methods are applied to several data sets of daily point and areal precipitation.

\section{Properties of function \texorpdfstring{\boldmath $g$}{g}} \label{sec2}

Our first result is based on Lukacs' Theorem \citep{LU:1954}, which states the following: Let $X$ and $Y$ be positive and independent random variables. Then $U=X+Y$ and $V=X/Y$ are independent if and only if both $X$ and $Y$ have gamma distributions with the same scale parameter.

\begin{proposition} \label{g-gamma}
\begin{enumerate}
\item[a)]
Let $X$ and $Y$ be positive and independent random variables. Then,
\begin{align*}
\Ex \left[ \frac{X-Y}{X+Y} \Big |X+Y>d \right]
&= \Ex \left[ \frac{X-Y}{X+Y} \right], \quad \text{for all } d>0,
\end{align*}
if and only if both $X$ and $Y$ have gamma distributions with the same scale parameter.
\item[b)]
Assume that $X,X_1,X_2$ are i.i.d. random variables. Then,
\begin{align*}
g_{X}(d) &= g_X(0), \quad \text{for all } d>0,
\end{align*}
if and only if $X$ is gamma distributed.
\end{enumerate}
\end{proposition}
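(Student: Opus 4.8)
The plan is to prove both equivalences through a common reduction and to separate the elementary implication from the substantive converse. Write $U=X+Y$ and let $R$ denote the relevant ratio: $R=(X-Y)/(X+Y)$ in part a) and $R=|X_1-X_2|/(X_1+X_2)$ in part b). Since $R$ is bounded and $\Ex[R\mid U>d]=\Ex[R\,\ind\{U>d\}]/P(U>d)$, the hypothesis that this equals $\Ex[R]$ for every $d>0$ is equivalent to $\int_d^\infty\bigl(\Ex[R\mid U=u]-\Ex[R]\bigr)\,dF_U(u)=0$ for all $d>0$. A tail integral of the finite signed measure $(\Ex[R\mid U=u]-\Ex[R])\,dF_U(u)$ that vanishes for every $d$ forces the measure itself to vanish, so the condition is equivalent to the regression statement $\Ex[R\mid U=u]=\Ex[R]$ for $F_U$-almost every $u$. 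This is the form I would work with throughout.

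For the implication gamma $\Rightarrow$ constancy I would invoke Lukacs' theorem directly. If $X,Y$ (resp.\ $X_1,X_2$) are gamma with a common scale, then $U=X+Y$ and $V=X/Y$ are independent; since $R$ is a function of $V$ alone, namely $(V-1)/(V+1)$ in part a) and $|V-1|/(V+1)$ in part b), it is independent of $U$, whence $\Ex[R\mid U=u]=\Ex[R]$ and, reversing the reduction, $\Ex[R\mid U>d]=\Ex[R]$ for all $d$. This settles the ``if'' direction of both parts at once.

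For the converse I would pass to Laplace transforms. In part a) the regression condition reads $\Ex[X\mid U]=cU$ with $c=\Ex[X/U]$; testing $\Ex[(X-cU)e^{-tU}]=0$ gives $(1-c)\psi_X'(t)\psi_Y(t)=c\,\psi_X(t)\psi_Y'(t)$, i.e.\ $(1-c)(\log\psi_X)'=c(\log\psi_Y)'$, so $\psi_X^{\,1-c}=\psi_Y^{\,c}$. In part b), using $|X_1-X_2|=(X_1+X_2)-2\min(X_1,X_2)$, the same test turns the condition into $\Ex[\min(X_1,X_2)\,e^{-tU}]=(1-c)\,(-\psi\psi')(t)$ with $\psi=\psi_X$. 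The aim is then to deduce that $U$ and $V$ are independent, so that Lukacs' theorem yields the gamma form and identifies the shape parameters through $c$.

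I expect the converse to be the main obstacle, and specifically the step that upgrades a single constant regression to full independence of $U$ and $V$: a priori one scalar identity is weaker than independence. The difficulty is sharpest in part b), where the absolute value prevents the Laplace transform from factoring, so that $\Ex[\min(X_1,X_2)e^{-tU}]$ is an incomplete Laplace transform that does not close into an ordinary differential equation for $\psi$; the absolute value is simultaneously what makes the statement nontrivial, since under exchangeability the signed ratio has vanishing conditional mean automatically. To break the impasse I would exploit the rigid multiplicative structure of the joint density of $(U,Z)$ with $Z=X_1/U$, namely $f_{U,Z}(u,z)\propto f(uz)f(u(1-z))$: writing $\phi=\log f$, the $z$-profile is governed by $\phi(uz)+\phi(u(1-z))$, and I would differentiate the regression identity in $u$ to show that constancy of $\Ex[|2Z-1|\mid U=u]$ forces this profile to depend on $u$ only through an additive constant, i.e.\ $\phi(x)=(\alpha-1)\log x-\beta x+\mathrm{const}$. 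This makes the conditional law of $Z$ given $U$ free of $u$, delivering $U\perp V$ and hence, via Lukacs, the gamma distribution.
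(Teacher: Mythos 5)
Your reduction of the tail condition to the constant-regression statement $\Ex[R\mid U=u]=\Ex[R]$ for $F_U$-a.e.\ $u$ is correct, and your ``if'' direction is the paper's own: Lukacs gives independence of $U$ and $V$, hence of $U$ and $R$, hence constancy of the conditional means. The genuine gap is the converse, which you rightly call the main obstacle but never close --- and the routes you sketch cannot close it. For part a), the observation you make in passing (``under exchangeability the signed ratio has vanishing conditional mean automatically'') in fact shows that the displayed identity in a) holds for \emph{every} pair of i.i.d.\ positive random variables, gamma or not: by symmetry both sides are $0$ for all $d$. Hence no argument can pass from that identity alone to ``$X,Y$ gamma''; consistently, your transform identity $\psi_X^{1-c}=\psi_Y^{c}$ degenerates to a tautology when $X$ and $Y$ are identically distributed (then $c=1/2$), and the announced step ``deduce that $U$ and $V$ are independent'' is not merely difficult but impossible from this starting point. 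For part b), where the absolute value makes the hypothesis informative, the step you propose --- differentiate $\Ex[\,|2Z-1|\mid U=u]=\mathrm{const}$ in $u$ and conclude that $\phi(uz)+\phi(u(1-z))$ separates, i.e.\ $\phi(x)=(\alpha-1)\log x-\beta x+\mathrm{const}$ --- is precisely the heart of the matter and is asserted rather than carried out; constancy of a single conditional moment for a.e.\ $u$ is far weaker than constancy of the conditional law of $Z$ given $U$, and you give no mechanism that bridges the two.

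For comparison, the paper's proof never leaves the independence framework: by Lukacs, gamma is equivalent to independence of $V$ and $U$; since $v\mapsto(v-1)/(v+1)$ is strictly increasing, $\sigma(R)=\sigma(V)$, so this is equivalent to independence of $R$ and $U$; and in b), since the conditional law of $R$ given $U$ is symmetric under the i.i.d.\ assumption, independence of $|R|$ from $U$ can be traded for independence of $R$ from $U$. The remaining link --- passing between that independence statement and the tail-mean condition actually appearing in the proposition --- is exactly the upgrade you flag: one direction is trivial, and your own exchangeability remark shows that for the signed ratio of part a) the reverse direction genuinely fails, so this link (the paper's ``or, likewise'') cannot be taken as an equivalence there either. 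In short, your diagnosis of where the difficulty sits is accurate, but what a complete proof of the statement as written would require --- and what neither your sketch nor an appeal to Lukacs' independence theorem supplies --- is a regression-type characterization theorem: constancy of $\Ex[\,|R|\mid U\,]$ alone implies the gamma law.
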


\begin{proof}
Let $X$ and $Y$ be positive and independent random variables. Then, using Lukacs Theorem, $h_1(V)=(1+1/V)^{-1}=X/(X+Y)$ and $X+Y$ are independent, if and only if  both $X$ and $Y$ have gamma distributions with the same scale parameter,  and the same assertion holds for $h_2(V)=(1+V)^{-1}=Y/(X+Y)$.
Since the function $h_1(v)-h_2(v)$ is stricly increasing for $v>0$, the independence condition is equivalent to the condition that $R=(X-Y)/(X+Y)=h_1(V)-h_2(V)$ and $X+Y$ are independent, or, likewise, to the condition
\begin{align*}
  \Ex[ R | X+Y>d ] &= \Ex[R], \quad \text{for all } d>0.
\end{align*}
This proves part a). Now, additionally assume that $X$ and $Y$ have the same distribution. Then, the distribution of $R$ is symmetric around 0. Hence, the sigma algebras generated by $R$ and $|R|$ coincide, which yields the assertion in b).
\end{proof}

\begin{remark} \label{rem-gamma}
From Proposition \ref{g-gamma} and the remarks in Section \ref{section1}, we obtain $g(d)=c(\alpha)\leq 1/2$ for all $d>0$, if $\alpha\geq 1$ . Similarly, for $\alpha\leq 1$, we have $g(d)=c(\alpha)\geq 1/2$ for all $d>0$.
From Prop. \ref{prop} in Appendix \ref{appendix}, we obtain the explicit values
\[
c(\alpha) = \left(2^{2\alpha-1} \, \alpha \, \mathrm{B}(\alpha,\alpha)\right)^{-1},
\]
where $\mathrm{B}(\cdot,\cdot)$ denotes the beta function.
For $\alpha=1/5$ and $\alpha=5$, we get
$c(1/5)\approx 0.798$ and $c(5)=63/256 \approx 0.246$, respectively.
These results show formally what can be seen in the left and right panels of Figure 1 in \cite{AL:2017}, which are generated using simulated data.

A typical measure to describe the tail of loss distributions is the asymptotic behaviour of the failure rate \citep[p. 34]{KP:2012}
\[
h(\infty) = \lim_{x\to\infty} h(x)
= \lim_{x\to\infty} \frac{f(x)}{1-F(x)}
= - \lim_{x\to\infty} \frac{d}{dx}\log f(x),
\]
where the last equality holds for distributions with support $[0,\infty)$.
 For gamma distributions with scale parameter 1, one has $h(\infty)=1$, irrespective of $\alpha$. Hence, this measure is not able to distinguish between gamma distributions with different shape parameters, in contrast to the function $g$. The same holds for the limit of the mean excess function \citep[p. 35]{KP:2012}. Both failure rate and mean excess function are nonlinear for gamma distributions.
\end{remark}

\section{A new proposal for an estimator of \texorpdfstring{\boldmath ${g(d)}$}{g(d)}} \label{sec3}

\subsection{Asymptotic behaviour of the Asmussen-Lehtomaa estimator} \label{sec3-1}
To estimate $g(d)$ based on an i.i.d. sample $X_1,\ldots,X_n$, where $n=2m$ is even, \cite{AL:2017} proposed the following estimator: use any pairing $(Y_k,Z_k)_{1\leq k\leq m}$ of the $X_i$ (e.g., $Y_k=X_{2k-1}, Z_k=X_{2k}$), and set
\begin{align} \label{AL-estimator}
  \tilde{g}_m(d) &= \frac{\sum_{k=1}^{m} R_k \ \ind\left(Y_k+Z_k>d\right)}{\sum_{k=1}^{m} \ind\left(Y_k+Z_k>d\right)},
  \quad \text{with } R_k=\frac{|Y_k-Z_k|}{Y_k+Z_k},
\end{align}
and where $\ind(A)$ is the indicator function of the event $A$.
The estimator proposed in (\ref{AL-estimator}) has the advantage that it can be computed fast even for very large sample sizes. On the other hand, it doesn't make efficient use of the sample; moreover, it requires splitting the sample randomly in two halves, leading to a randomized statistic.
This is illustrated in Figure \ref{fig1}; see also Figure 3 in \cite{AL:2017}.
Since we are particularly interested in the tail behaviour, i.e. in large values of $d$, the sample size will typically be small, and the disadvantages predominate.

\begin{figure}
  \centering
  \includegraphics[scale=0.7]{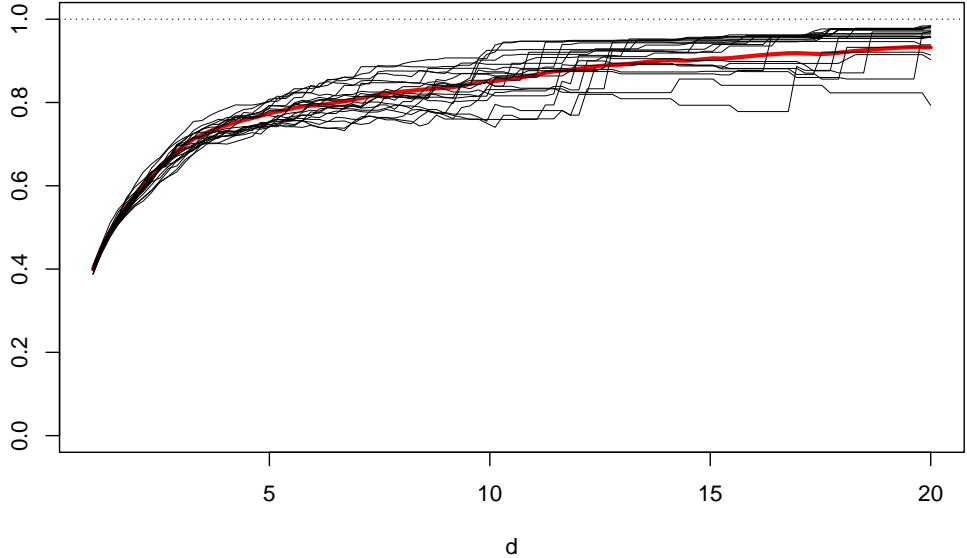}
  \caption{In black: Graphs of 20 versions of $\tilde{g}_m(d)$, generated from random partitions of a classical set of Danish fire insurance data, available in the R package {\em evir} \citep{EV:2018}. The dataset is scaled to have mean 1. In red: Graph of the new estimator $\hat{g}_n(d)$. \label{fig1}}
\end{figure}

\medskip
To derive the limiting distribution of $\tilde{g}_m(d)$, define for any $d\geq 0$ such that $\nu_d=P(Y_1+Z_1>d)>0$ the quantities $S_k= R_k\, \ind(Y_k+Z_k>d), T_k=\ind(Y_k+Z_k>d)$, $\mu_d=\Ex(S_1)>0$, $\mu_{2,d}=\Ex(S_1^2)$. By the central limit theorem,
\begin{align*}
\frac{1}{\sqrt{m}} \sum_{k=1}^{m} \left(
\begin{pmatrix} S_k \\ T_k \end{pmatrix} -
\begin{pmatrix} \mu_d \\ \nu_d \end{pmatrix} \right)
& \cd N_2( \mathbf{0}, \tilde{\Sigma}_d),
\quad \text{where } \tilde{\Sigma}_d =
\begin{pmatrix}
\mu_{2,d}-\mu_d^2 & \mu_d(1-\nu_d) \\
\mu_d(1-\nu_d) & \nu_d(1-\nu_d)
\end{pmatrix}.
\end{align*}
Then, the delta method yields
\[
\sqrt{m} \left( \tilde{g}_m(d) - g(d) \right)
\cd N\left(0,
\frac{\mu_{2,d}}{\nu_d^2} - \frac{\mu_d^2}{\nu_d^3}\right).
\]
Noting that $g(d)=\mu_d/\nu_d$ and writing $g_2(d)=\Ex[R_1^2|Y_1+Z_1>d]=\mu_{2,d}/\nu_d$, we end up with
\begin{align} \label{asymp-g-tilde}
\sqrt{m\nu_d} \left( \tilde{g}_m(d) - g(d) \right)
& \cd N\left(0, g_2(d) - \left(g(d)\right)^2 \right).
\end{align}
Since $R_k<1$, this result holds without any assumptions, as long as $\nu_d>0$. Note that the effective sample size is $m\nu_d$.

\begin{example}
Assume that $Y_1$ and $Z_1$ are i.i.d. gamma-distributed with shape parameter $\alpha$ and rate $\beta$. Then, $\nu_d=P(W>d)$, where $W$ has a gamma distribution with parameters $2\alpha$ and $\beta$. Using Proposition \ref{g-gamma} and Prop. \ref{prop} in Appendix \ref{appendix}, we obtain
\[
g_2(d) \,=\, \Ex[R_1^2]
\,=\, \frac{\Gamma(2\alpha+1)}{\Gamma(2\alpha+2)}
\]
for all $d>0$. It follows that
\begin{align*}
\sqrt{m\nu_d} \left( \tilde{g}_m(d) - {c(\alpha)} \right) & \cd
N\left(0, \frac{\Gamma(2\alpha+1)}{\Gamma(2\alpha+2)} -c^2(\alpha) \right),
\end{align*}
where $c(\alpha)$ is given in Remark \ref{rem-gamma}. For $\alpha=1$, i.e. the exponential distribution, this results in
\[
\sqrt{m\nu_d} \left( \tilde{g}_m(d) - 1/2 \right) \cd
N\left(0, 1/12 \right).
\]
\end{example}

\subsection{A new estimator based on \texorpdfstring{\boldmath $U$-statistics}{U-statistics}}
A more efficient way of estimating $g(d)$ is the use of suitable $U$-statistics (for the general theory, see \cite{KB:1994,LE:1990}). To this end, define kernels of degree 2
\begin{align*}
  h^{(1)}(x_1,x_2;d) = \frac{|x_1-x_2|}{x_1+x_2} \, \ind\left(x_1+x_2>d\right),  &\quad
  h^{(2)}(x_1,x_2;d) = \ind\left(x_1+x_2>d\right),
\end{align*}
and define two $U$-statistics by
\begin{align*}
  U_n^{(l)}(d) &=
  \frac{2}{n(n-1)} \sum_{1\leq i<j\leq n} h^{(l)}(X_i,X_j;d), \quad l=1,2.
\end{align*}
Obviously, $U_{n}^{(l)}(d)$ is an unbiased estimator of $\theta^{(l)}_d=\Ex[h^{(l)}(X_1,X_2;d)], l=1,2$. Note that $\theta^{(1)}_d$ and $\theta^{(2)}_d$ coincide with $\mu_d$ and $\nu_d$.
Then, estimate $g(d)$ by the ratio of these statistics:
\begin{align} \label{U-est}
  \hat{g}_n(d) &= \frac{U_n^{(1)}(d)}{U_n^{(2)}(d)}, \quad d>0.
\end{align}
By the strong law of large numbers for $U$-statistics \citep[p.~122]{LE:1990}, $U_{n}^{(l)}(d), l=1,2,$ and hence $\hat{g}_n(d)$ are strongly consistent estimators for $\theta^{(l)}_d$ and $g(d)$, respectively.

\smallskip
The joint asymptotic distribution of $U$-statistics can be found in  \citet[p.~76]{LE:1990} or \citet[p.~132]{KB:1994}. This yields the following result.

\begin{proposition} \label{prop4}
For $l=1,2$, let
\begin{align*}
\psi^{(l)}(x_1,x_2;d) &= h^{(l)}(x_1,x_2;d)-\theta^{(l)}_d, \\
\psi^{(l)}_1(x_1;d) &= \Ex\big[\psi^{(l)}(x_1,X_2;d)\big].
\end{align*}
Further, define
\begin{align*}
\eta^{(l)}_1(d)=\Ex\left[\left( \psi^{(l)}_1(X_1;d) \right)^2\right] \ (l=1,2), \quad
\eta^{(1,2)}_1(d)=\Ex\left[\psi^{(1)}_1(X_1;d)\, \psi^{(2)}_1(X_1;d)\right].
\end{align*}
If $\eta^{(l)}_1(d)>0$ for $l=1,2$, then,
\begin{align*}
\sqrt{n} \left(
\begin{pmatrix} U_{n}^{(1)}(d) \\ U_{n}^{(2)}(d) \end{pmatrix} -
\begin{pmatrix} \mu_d \\ \nu_d \end{pmatrix} \right)
& \cd N_2(\mathbf{0},4\Sigma_d),
\quad \text{where } \Sigma_d =
\begin{pmatrix}
\eta^{(1)}_1(d) & \eta^{(1,2)}_1(d) \\
\eta^{(1,2)}_1(d) & \eta^{(2)}_1(d)
\end{pmatrix}.
\end{align*}
\end{proposition}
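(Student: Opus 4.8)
The plan is to apply the standard joint central limit theorem for vectors of $U$-statistics with kernels of the same degree, which is exactly the result cited from \citet[p.~76]{LE:1990}. First I would verify the applicability conditions: both kernels $h^{(1)}$ and $h^{(2)}$ are symmetric in their two arguments (the absolute value and the sum are both symmetric), they have degree $m=2$, and they are bounded --- indeed $|h^{(1)}|\leq 1$ and $h^{(2)}\in\{0,1\}$ --- so all second moments needed for the CLT exist without any tail assumptions on $F$. This boundedness is what lets the result hold with no conditions beyond the stated nondegeneracy $\eta^{(l)}_1(d)>0$.

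Next I would recall the Hoeffding projection machinery. For each kernel one forms the first projection $\psi^{(l)}_1(x_1;d)=\Ex[\psi^{(l)}(x_1,X_2;d)]$, which is precisely the quantity defined in the statement. The general theorem says that for a vector of $U$-statistics of common degree $m=2$, the limiting covariance matrix has entries $m^2$ times the covariances of the first projections; here $m^2=4$, which accounts for the factor $4$ multiplying $\Sigma_d$. Concretely, $\sqrt{n}\,\bigl(U_n^{(l)}-\theta^{(l)}_d\bigr)$ is asymptotically equivalent, by the Hoeffding decomposition, to $\frac{2}{n}\sum_{i=1}^n \psi^{(l)}_1(X_i;d)$ plus a remainder of order $O_P(1/n)$ coming from the second-order degenerate term. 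Since $\theta^{(1)}_d=\mu_d$ and $\theta^{(2)}_d=\nu_d$ as already noted in the text, I would identify the mean vector.

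The heart of the argument is then the multivariate CLT applied to the i.i.d.\ bivariate vectors $\bigl(\psi^{(1)}_1(X_i;d),\psi^{(2)}_1(X_i;d)\bigr)$. These have mean zero (each $\psi^{(l)}_1$ has zero expectation because $\psi^{(l)}$ does), and their covariance matrix is exactly $\Sigma_d$ with diagonal entries $\eta^{(1)}_1(d),\eta^{(2)}_1(d)$ and off-diagonal entry $\eta^{(1,2)}_1(d)$. Multiplying the leading linear term $\frac{2}{\sqrt n}\sum_i(\cdots)$ gives the factor $2^2=4$ in the asymptotic covariance, and Slutsky's theorem disposes of the asymptotically negligible remainder. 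The assumption $\eta^{(l)}_1(d)>0$ guarantees the projections are nondegenerate so that the limit is a genuine (non-singular in each coordinate) Gaussian rather than a degenerate one requiring a different normalization.

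I do not anticipate a serious obstacle, since this is essentially a direct citation of the joint $U$-statistic CLT; the only point requiring a little care is the bookkeeping of the factor $4$, i.e.\ tracking that the degree-$2$ kernels produce $m^2=4$ rather than the scalar-case constant, and confirming that boundedness of the kernels makes the second-moment hypotheses of the cited theorem automatic so that the statement needs no conditions on $F$ beyond nondegeneracy. I would close by simply noting that the off-diagonal term $\eta^{(1,2)}_1(d)$ is the covariance of the two projections, which is what will later drive the delta-method variance for the ratio $\hat g_n(d)$.
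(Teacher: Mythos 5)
Your proposal is correct and takes essentially the same route as the paper, which simply invokes the joint CLT for $U$-statistics from \citet[p.~76]{LE:1990} and \citet[p.~132]{KB:1994}; your verification that the bounded, symmetric degree-2 kernels satisfy the moment hypotheses, and your sketch of the Hoeffding projection argument with the $m^2=4$ factor, correctly fills in the internals of that cited theorem.
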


Using Prop. \ref{prop4} and the delta method, we can derive the asymptotic behaviour of $\hat{g}_n(d)$.

\begin{theorem} \label{theorem4}
Let $\nu_d>0$, and $\eta^{(l)}_1(d)>0$ for $l=1,2$. Then,
\begin{align} \label{asymp-g-hat}
\sqrt{n\nu_d} \left( \hat{g}_n(d) - g(d) \right)
& \cd N\left(0, \sigma_d^2 \right),
\end{align}
where
\begin{align} \label{sigma-d}
\sigma_d^2 = \frac{4}{\nu_d} \left( \eta^{(1)}_1(d)
- 2g(d) \, \eta^{(1,2)}_1(d) + g^2(d) \, \eta^{(2)}_1(d) \right).
\end{align}
\end{theorem}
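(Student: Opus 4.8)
The plan is to apply the multivariate delta method to the joint convergence established in Proposition \ref{prop4}. Write $\phi(x,y) = x/y$, so that $g(d) = \phi(\mu_d,\nu_d)$ and $\hat{g}_n(d) = \phi\big(U_n^{(1)}(d),U_n^{(2)}(d)\big)$. The map $\phi$ is continuously differentiable on the open half-plane $\{y>0\}$, and the hypothesis $\nu_d>0$ guarantees that the centering point $(\mu_d,\nu_d)$ lies there, so the delta method is applicable; the conditions $\eta^{(l)}_1(d)>0$ for $l=1,2$ are precisely what is needed to invoke Proposition \ref{prop4} in the first place. First I would record the gradient at the centering point, namely $\nabla\phi(\mu_d,\nu_d) = \big(1/\nu_d,\,-\mu_d/\nu_d^2\big)^{\top}$.

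Proposition \ref{prop4} together with the delta method then yields
\[
\sqrt{n}\left(\hat{g}_n(d)-g(d)\right) \cd N\!\left(0,\ \nabla\phi^{\top}(4\Sigma_d)\,\nabla\phi\right),
\]
with the gradient evaluated at $(\mu_d,\nu_d)$. Next I would multiply out the quadratic form, obtaining
\[
\nabla\phi^{\top}\Sigma_d\,\nabla\phi
= \frac{1}{\nu_d^2}\left(\eta^{(1)}_1(d) - \frac{2\mu_d}{\nu_d}\,\eta^{(1,2)}_1(d) + \frac{\mu_d^2}{\nu_d^2}\,\eta^{(2)}_1(d)\right),
\]
and then substitute $g(d)=\mu_d/\nu_d$ to replace the ratios $\mu_d/\nu_d$ and $\mu_d^2/\nu_d^2$. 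Including the factor $4$ from Proposition \ref{prop4}, the asymptotic variance of $\sqrt{n}(\hat{g}_n(d)-g(d))$ becomes $(4/\nu_d^2)\big(\eta^{(1)}_1(d) - 2g(d)\,\eta^{(1,2)}_1(d) + g^2(d)\,\eta^{(2)}_1(d)\big)$.

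Finally I would rescale: since $\sqrt{n\nu_d}\,(\hat{g}_n(d)-g(d)) = \sqrt{\nu_d}\cdot\sqrt{n}\,(\hat{g}_n(d)-g(d))$ and $\nu_d$ is a constant for fixed $d$, multiplying the variance above by $\nu_d$ produces exactly $\sigma_d^2$ as in \eqref{sigma-d}, which completes the argument. There is no genuine obstacle here—the proof is a direct application of the delta method—so the only points requiring care are the bookkeeping of the factor $4$ and the $\sqrt{\nu_d}$ rescaling, and the verification that $\nu_d>0$ secures differentiability of $\phi$ at the centering point. As a consistency check one may note that the bracketed expression equals $\Ex\big[(\psi^{(1)}_1(X_1;d)-g(d)\,\psi^{(2)}_1(X_1;d))^2\big]$, making its non-negativity manifest and identifying $\sigma_d^2$ with the variance of a single projected summand.
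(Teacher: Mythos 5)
Your proposal is correct and follows exactly the route the paper takes: the paper derives Theorem \ref{theorem4} by applying the delta method to the joint normality of $\big(U_n^{(1)}(d),U_n^{(2)}(d)\big)$ from Proposition \ref{prop4}, which is precisely your argument, with the gradient computation, the substitution $g(d)=\mu_d/\nu_d$, and the $\sqrt{\nu_d}$ rescaling all carried out correctly. Your closing observation that the bracketed term equals $\Ex\big[(\psi^{(1)}_1(X_1;d)-g(d)\,\psi^{(2)}_1(X_1;d))^2\big]$ is a nice addition not stated in the paper.
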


\subsection{Asymptotic relative efficiency}

Comparing (\ref{asymp-g-tilde}) with (\ref{asymp-g-hat}), one may anticipate that the asymptotic relativ efficiency (ARE) of $\tilde{g}_m(d)$ relativ to $\hat{g}_n(d)$, where $n=2m$, is roughly 1/2; here, the ARE is given by
\begin{align*}
ARE(\tilde{g}_m(d),\hat{g}_n(d)) &=\frac{\sigma_d^2}{2\tilde{\sigma}_d^2},
\end{align*}
where $\tilde{\sigma}_d^2=g_2(d)-(g(d))^2$.
Figure \ref{fig2} shows the (numerically computed) ARE's for gamma distributions with various shape parameters $\alpha$ and rate $\beta=\alpha$ (such that the expectation is 1) for $0<d\leq 5.2$.

\begin{figure}
  \centering
  \includegraphics[scale=0.65]{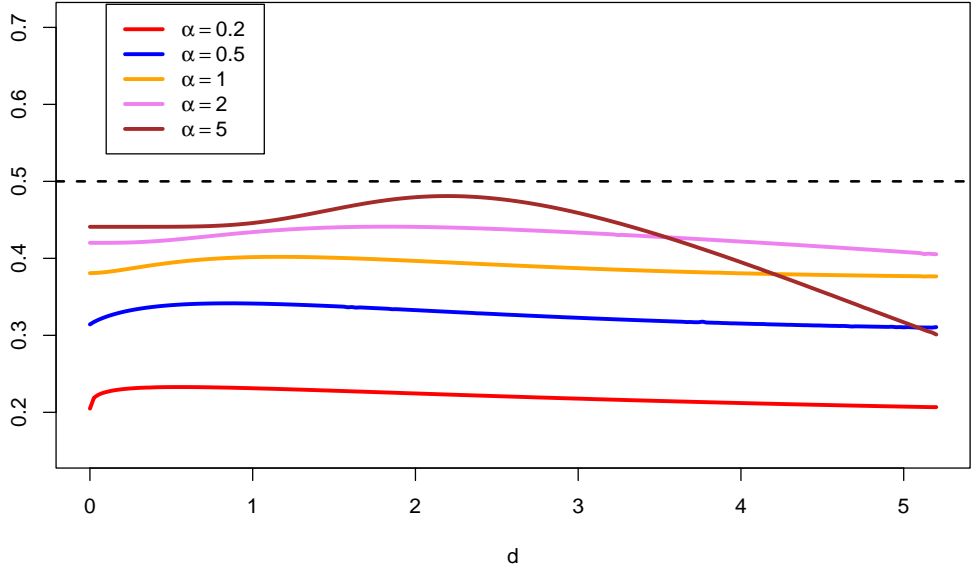}
  \caption{The graph shows the asymptotic relativ efficiency of $\tilde{g}_m(d)$ relativ to $\hat{g}_n(d)$ for gamma distributions with different shape parameters $\alpha$ and rate $\beta=\alpha$. \label{fig2}}
\end{figure}

First, note that $\sigma_d^2$, unlike $\tilde{\sigma}_d^2$, depends on $d$ even for the gamma distribution. For all cases considered, the ARE  of $\tilde{g}_m(d)$ relativ to $\hat{g}_n(d)$ is smaller than 0.5. By and large, the ARE is increasing in the shape parameter: it is between $0.20$ and $0.23$ for $\alpha=0.2$, between $0.38$ and $0.40$ for the exponential distribution, and ranges from $0.41$ to $0.44$ for $\alpha=2$.
For fixed $\alpha$ and large values of $d$, the ARE decreases markedly, which can be observed in Fig. \ref{fig2} for $\alpha=5$, but also occurs for other values of $\alpha$ for larger $d$.
In summary, it becomes apparent that the estimator based on the ratio of $U$-statistics is much more efficient than the proposal in Section \ref{sec3-1}.


\section{Estimators of variance} \label{sec4}

In this section, we discuss and compare several methods of estimating the variance $\sigma_d^2$ in (\ref{sigma-d}) or its counterpart for finite sample size. There are at least three general approaches. The first one is to derive consistent estimators $\sigma_{U^{(l)}}^2 (l=1,2)$ and $\sigma_{U^{(1,2)}}^2$ of $\Var(U_{n}^{(l)}(d))$ and  $\Cov(U_{n}^{(1)}(d), U_{n}^{(2)}(d))$, respectively. Then, a consistent estimator of $\sigma_d^2$ is given by
\begin{align} \label{hat-sigma-d}
\hat{\sigma}_d^2 = \frac{n}{U_{n}^{(2)}(d)} \left(
\sigma_{U^{(1)}}^2 - 2\hat{g}_n(d) \, \sigma_{U^{(1,2)}}^2
+ \hat{g}_n^2(d) \, \sigma_{U^{(2)}}^2 \right).
\end{align}
Second, one can estimate the quantities $\eta^{(l)}_1(d) (l=1,2)$ and $\eta^{(1,2)}_1(d)$ in the asymptotic covariance matrix, and use (\ref{sigma-d}), with $\nu_d$ replaced by $U_n^{(2)}$.
The third possibility is a direct approach using resampling procedures.

\subsection{Variance estimation using the unbiased variance estimator}

Let $U_n=2/(n(n-1)) \sum_{i<j} h(X_i,X_j)$ be a general $U$-statistic of degree 2, estimating $\theta=\Ex h(X_1,X_2)$. Defining $\zeta_0=\theta^2$, $h_1(x_1)=\Ex h(x_1,X_2)$ and
\begin{align*}
  \zeta_1=\Ex\left[h_1^2(X_1)\right],  &\quad
  \zeta_2=\Ex\left[h^2(X_1,X_2)\right],
\end{align*}
the finite sample variance of $U_n$ is given by
\begin{align*}
  \Var(U_n) &= \frac{2}{n(n-1)} \left\{ 2(n-2)\zeta_1 + \zeta_2 - (2n-3) \zeta_0 \right\}.
\end{align*}
One can estimate $\zeta_c, c=0,1,2,$ by
\begin{align}
\hat{\zeta}_0 &= \frac{1}{n^{\underline{4}}} \sum_{d(i,j,k,l)} h(X_i,X_j)h(X_k,X_l), \nonumber \\
\hat{\zeta}_1 &= \frac{1}{n^{\underline{3}}} \sum_{d(i,j,k)} h(X_i,X_j)h(X_i,X_k), \qquad
\hat{\zeta}_2 = \binom{n}{2}^{-1} \sum_{i<j} h^2(X_i,X_j), \label{zeta1}
\end{align}
where $d(i_1,\ldots,i_m)$ denotes a set of distinct indices $1\leq i_1,\ldots,i_m \leq n$, and $n^{\underline{m}}=n(n-1)\cdots (n-m+1)$.
Then, the minimum variance unbiased estimator of $\Var(U_n)$ is given by \citep{SS:1992}
\begin{align} \label{unbiased-est}
  \hat{\sigma}_U^2 &= \frac{2}{n(n-1)} \left\{ 2(n-2)\hat{\zeta}_1 + \hat{\zeta}_2 - (2n-3) \hat{\zeta}_0 \right\}
  \ = \ U_n^2 - \hat{\zeta}_0.
\end{align}
The second equality has also been noted by \citet{WL:2014}. All formulas can directly be generalized to multivariate $U$-statistics by writing $hh^T$ and $h_1h_1^T$ instead of $h^2$ and $h_1^2$.
The degree of the $U$-statistics in (\ref{unbiased-est}) is 4. To reduce the computational burden, it is possible to rewrite it as
\begin{align} \label{unbiased-est2}
  \hat{\sigma}_U^2 &= \frac{4C_1^2-2C_2^2}{n^{\underline{4}}}  - \frac{4n-6}{(n-2)(n-3)} U_n^2,
\end{align}
where
\begin{align*}
C_1^2 &= \sum_{i=1}^n \bigg( \sum_{j\neq i} h(X_i,X_j) \bigg)^2, \qquad
C_2^2 = \sum_{i\neq j} h^2(X_i,X_j)
\end{align*}
\citep[p. 2972]{SS:1992}. In (\ref{unbiased-est2}), the number of summands is $O(n^2)$ compared to $O(n^4)$ in (\ref{unbiased-est}).
To obtain a multivariate version of (\ref{unbiased-est2}), write $hh^T$ and $U_nU_n^T$ instead of $h^2$ and $U_n^2$, and define
\begin{align} \label{unbiased-est3}
C_1^2 &= \sum_{i=1}^n S_iS_i^T, \quad \text{where }
S_i=\sum_{j\neq i} h(X_i,X_j).
\end{align}
After plugging the unbiased estimator of the covariance matrix in formula (\ref{hat-sigma-d}), we denote the resulting estimator by $\hat{\sigma}_{U,d}^2$.

\subsection{Estimating the variance using Noether's estimator}

Two further proposals given in \citet{SS:1992} for estimating the variance of a U-statistic of degree 2 are the Noether and modified Noether estimator defined by
\begin{align*} 
  \hat{\sigma}_N^2 &= \binom{n}{2}^{-2} C_1^2
  - \binom{n}{2}^{-1} \Big\{ (2n-3) U_n^2 +1 \Big\}, \\
  \hat{\sigma}_{Nm}^2 &= \frac{n(n-1)}{(n-2)(n-3)} \hat{\sigma}_N^2.  \label{noether-mod}
\end{align*}
Computing multivariate generalizations and plugging them in (\ref{hat-sigma-d}) leads to estimators denoted $\hat{\sigma}_{N,d}^2$ and $\hat{\sigma}_{Nm,d}^2$.

\subsection{Variance estimation using the large-sample variance}

This approach uses plug-in estimators for $\eta^{(l)}_1(d) (l=1,2)$ and $\eta^{(1,2)}_1(d)$ given in Proposition \ref{prop4}.
Define $\hat{\zeta}_1^{(l)}$ as $\hat{\zeta}_1$ in (\ref{zeta1}), replacing $h$ by $h^{(l)}$, and put
\begin{align*}
\hat{\zeta}_1^{(1,2)} &= \frac{1}{n^{(3)}} \sum_{d(i,j,k)}  h^{(1)}(X_i,X_j) \; h^{(2)}(X_i,X_k).
\end{align*}
Then, the estimators for the entries in the large-sample covariance matrix are given by
\begin{align*}
  \hat{\eta}^{(l)}_1(d) &= \hat{\zeta}_1^{(l)} - \big( U_{n}^{(l)}(d)\big)^2,  \quad l=1,2, \\
 \hat{\eta}^{(1,2)}_1(d) &= \hat{\zeta}_1^{(1,2)} - U_{n}^{(1)}(d)U_{n}^{(2)}(d).
\end{align*}
Replacing all quantities in (\ref{sigma-d}) by the corresponding estimators yields the variance estimator $\hat{\sigma}_{L,d}^2$.
By formula (8) in \cite{SS:1992}, one has
$\hat{\zeta}_1 = (C_1^2-C_2^2)/n^{\underline{3}}$, and using (\ref{unbiased-est3}), we obtain a corresponding multivariate generalization with complexity $O(n^2)$.

\subsection{Variance estimators based on resampling procedures}

To obtain the bootstrap estimator, let
$(X_{j1}^*,\ldots,X_{jn}^*), j=1,\ldots,M,$ be conditionally independent samples with distribution function $F_n$, given $X_1,\ldots,X_n$. Here, $F_n$ denotes the empirical distribution function of $X_1,\ldots,X_n$.
For the statistic $\hat{g}_n(d)=g( U_n^{(l)}(d), l=1,2)$ in (\ref{U-est}), one has to compute
\begin{align*}
g_n^{(j)}(d) &= g\left(
U_n^{(l)}(X_{j1}^*,\ldots,X_{jn}^*;d), l=1,2 \right)
\end{align*}
for $j=1,\ldots,M$, and
$\bar{g}_n^*(d)=M^{-1}\sum_{j=1}^M g_n^{(j)}(d)$.
Then, the Monte Carlo version of the bootstrap estimator of $\Var(\hat{g}_n(d))$ is given by
\begin{align*}
\hat{\sigma}_{B,d}^2 &= \frac{1}{M-1} \sum_{j=1}^{M}
\left( g_n^{(j)}(d) - \bar{g}_n^*(d) \right)^2.
\end{align*}
The number of bootstrap replications $M$ should not be chosen too small; we use $M=999$ in all simulations in the next section.

\medskip
The jackknife procedure for a function of several $U$-statistics is described in \citet[p. 227]{LE:1990}. Here, we compute
\begin{align*}
g_{n-1}^{(-j)}(d) &= g\left(
U_{n-1}^{(l)}(X_1,\ldots,X_{j-1},X_{j+1},\ldots,X_n;d), l=1,2 \right)
\end{align*}
for $j=1,\ldots,n$, and
$\bar{g}_{n-1}(d)=n^{-1}\sum_{i=1}^n g_{n-1}^{(-j)}(d)$. The jackknife estimator of $\Var(\hat{g}_n(d))$ is given by
\begin{align*}
\hat{\sigma}_{J,d}^2 &= \frac{n-1}{n} \sum_{j=1}^{n}
\left( g_{n-1}^{(-j)}(d)  - \bar{g}_{n-1}(d) \right)^2.
\end{align*}
\cite{CV:1981} show that the jackknife estimator of the variance of a U-statistic with degree 2 has some desirable properties. The comparative performance of $\hat{\sigma}_{J,d}^2$ is examined in the next section via simulations.

\section{Numerical illustrations} \label{sec5}
\subsection{RSME and bias of the different estimators of variance} \label{sec5-1}
In the first part of this section, we compare the performance of all estimators of the variance of $\hat{g}_n(d)$ introduced in Sec. \ref{sec4} by computer simulations. Hence, in a first step, we approximated the true variance of $\sqrt{n\nu_d}\,\hat{g}_n(d)$ by a Monte Carlo simulation with $10^6$ replications.
In a second simulation with $10^4$ repetitions, the averages (Ave) of the relative values (i.e. estimator divided by the true variance) and the root mean squared error (RMSE) are computed.

As distributions, we choose gamma distributions with different shape parameter $\alpha$ and rate $\beta=\alpha$. In all simulations, we use effective sample sizes, defined as follows: for given values of $\alpha$ and $d$, the total sample size $n$ was chosen such that $n\nu_d= nP(X_1+X_2>d)=n_{\text{eff}}$.
Tables \ref{tab1}-\ref{tab3} show the results.

\begin{table}
\centering
\begin{tabular}{rlrrrrrr}
  \hline
 $\alpha$ &      & $\hat{\sigma}_{U,d}^2$ & $\hat{\sigma}_{N,d}^2$ & $\hat{\sigma}_{Nm,d}^2$ & $\hat{\sigma}_{L,d}^2$ & $\hat{\sigma}_{B,d}^2$ & $\hat{\sigma}_{J,d}^2$ \\
  \hline
  0.2 & Ave & 0.893 & 0.776 & 0.810 & 0.822 & 1.143 & 0.976 \\
   & RMSE & 0.017 & 0.016 & 0.016 & 0.017 & 0.023 & 0.019 \\
  0.5 & Ave & 0.899 & 0.670 & 0.701 & 0.838 & 1.028 & 0.993 \\
   & RMSE & 0.020 & 0.027 & 0.025 & 0.021 & 0.021 & 0.022 \\
  1.0 & Ave & 0.886 &   -   &   -   & 0.837 & 0.972 & 0.996 \\
   & RMSE & 0.017 &   -   &   -   & 0.018 & 0.016 & 0.018 \\
  2.0 & Ave & 0.882 &   -   &   -   & 0.846 & 0.941 & 1.011 \\
   & RMSE & 0.014 &   -   &   -   & 0.015 & 0.014 & 0.015 \\
  5.0 & Ave & 0.879 &   -   &   -   & 0.862 & 0.912 & 1.022 \\
   & RMSE & 0.010 &   -   &   -   & 0.010 & 0.010 & 0.012 \\
   \hline
\end{tabular}
\caption{Ave and RMSE of all estimators introduced in Sec. \ref{sec4} for $n_{\text{eff}}=20$, $d=3$ and varying shape parameter $\alpha$. The entry - indicates a negative value in more than 1\% of cases.} \label{tab1}
\end{table}

\begin{table}
\centering
\begin{tabular}{rlrrrrrr}
  \hline
 $n_{\text{eff}}$ &      & $\hat{\sigma}_{U,d}^2$ & $\hat{\sigma}_{N,d}^2$ & $\hat{\sigma}_{Nm,d}^2$ & $\hat{\sigma}_{L,d}^2$ & $\hat{\sigma}_{B,d}^2$ & $\hat{\sigma}_{J,d}^2$ \\
  \hline
 10 & Ave & 0.747 &     - &     - & 0.664 & 0.947 & 1.024 \\
   & RMSE & 0.030 &     - &     - & 0.034 & 0.028 & 0.048 \\
 20 & Ave & 0.889 &     - &     - & 0.840 & 0.972 & 0.998 \\
   & RMSE & 0.016 &     - &     - & 0.018 & 0.016 & 0.018 \\
 40 & Ave & 0.944 & 0.743 & 0.758 & 0.918 & 0.982 & 0.994 \\
   & RMSE & 0.011 & 0.020 & 0.019 & 0.011 & 0.011 & 0.011 \\
 80 & Ave & 0.976 & 0.876 & 0.884 & 0.963 & 0.995 & 1.001 \\
   & RMSE & 0.007 & 0.010 & 0.010 & 0.007 & 0.008 & 0.007 \\
160 & Ave & 0.987 & 0.937 & 0.942 & 0.980 & 0.997 & 0.999 \\
   & RMSE & 0.005 & 0.006 & 0.006 & 0.005 & 0.006 & 0.005 \\
  \hline
\end{tabular}
\caption{Ave and RMSE of all estimators introduced in Sec. \ref{sec4} for $\alpha=1, d=3$ and increasing $n_{\text{eff}}$. The entry - indicates a negative value in more than 1\% of cases.} \label{tab2}
\end{table}

\begin{table}
\centering
\begin{tabular}{rlrrrrrr}
  \hline
 $d$ &      & $\hat{\sigma}_{U,d}^2$ & $\hat{\sigma}_{N,d}^2$ & $\hat{\sigma}_{Nm,d}^2$ & $\hat{\sigma}_{L,d}^2$ & $\hat{\sigma}_{B,d}^2$ & $\hat{\sigma}_{J,d}^2$ \\
  \hline
  0 & Ave & 0.997 & 0.767 & 0.851 & 0.857 & 0.959 & 1.039 \\
   & RMSE & 0.020 & 0.024 & 0.023 & 0.021 & 0.018 & 0.020 \\
  1 & Ave & 0.994 & 0.783 & 0.844 & 0.895 & 1.007 & 1.016 \\
   & RMSE & 0.009 & 0.017 & 0.014 & 0.011 & 0.009 & 0.009 \\
  2 & Ave & 0.968 & 0.765 & 0.797 & 0.914 & 0.991 & 1.002 \\
   & RMSE & 0.010 & 0.018 & 0.017 & 0.011 & 0.011 & 0.011 \\
  3 & Ave & 0.946 & 0.744 & 0.759 & 0.920 & 0.984 & 0.997 \\
   & RMSE & 0.011 & 0.020 & 0.019 & 0.011 & 0.011 & 0.011 \\
  4 & Ave & 0.935 & 0.733 & 0.739 & 0.923 & 0.982 & 1.000 \\
   & RMSE & 0.010 & 0.020 & 0.019 & 0.011 & 0.010 & 0.010 \\
  5 & Ave & 0.931 & 0.727 & 0.730 & 0.925 & 0.980 & 1.006 \\
   & RMSE & 0.010 & 0.020 & 0.020 & 0.010 & 0.010 & 0.010 \\
   \hline
\end{tabular}
\caption{Ave and RMSE of all estimators introduced in Sec. \ref{sec4} for $\alpha=1, n_{\text{eff}}=40$ and varying $d$. The entry - indicates a negative value in more than 1\% of cases.} \label{tab3}
\end{table}

In Table \ref{tab1}, we use $n_{\text{eff}}=20, d=3$ and varying values of $\alpha$. The main findings are as follows. The Noether's estimator $\hat{\sigma}_{N,d}^2$ and its modification $\hat{\sigma}_{Nm,d}^2$ can yield negative values. If this happened in more than $1\%$ of cases, we don't report the result. For effective sample size 20, this occurred for $\alpha=1,2,5$. Hence, these estimators should not be used for small sample size. Even for $n_{\text{eff}}=80$ (results not shown), these two estimators have larger bias and RMSE compared to all other estimators, and can not be recommended.
The remaining estimators all work fine, whereby the differences for a specific estimator between the different distributions often exceed the differences between the estimators. The RMSE values are almost identical between the four estimators, and decrease in $\alpha$. The estimators $\hat{\sigma}_{U,d}^2$ and $\hat{\sigma}_{L,d}^2$ have a negative bias for all distributions for this small sample size.

In Table \ref{tab2}, we set $\alpha=1, d=3$ and vary $n_{\text{eff}}$.
As expected, bias and RMSE of all estimators tend to zero with increasing sample size; the speed of convergence of the RMSE to zero is of order $n^{-1/2}$.

Finally, Table \ref{tab3} shows the results for $\alpha=1, n_{\text{eff}}=40$  and varying values of $d$. For this sample size, the bias of $\hat{\sigma}_{L,d}^2$ is negative for all thresholds $d$, and this still holds for even larger samples. To a lesser extent, similar comments apply to $\hat{\sigma}_{U,d}^2$. The estimators $\hat{\sigma}_{B,d}^2$ and  $\hat{\sigma}_{J,d}^2$ have a smaller bias in the majority of cases, with positive or negative values depending on $d$. For $n_{\text{eff}}=80$, the last three estimators are nearly unbiased.

Summarizing the results, the estimators $\hat{\sigma}_{N,d}^2$ and $\hat{\sigma}_{Nm,d}^2$ should not be used. Since $\hat{\sigma}_{U,d}^2$ outperforms $\hat{\sigma}_{L,d}^2$ in terms of bias, not much supports the use of the latter. The coice between $\hat{\sigma}_{U,d}^2$, $\hat{\sigma}_{B,d}^2$ and $\hat{\sigma}_{J,d}^2$  is a matter of taste.
If bias is a serious concern, the last two should be preferred.
If computing time is a problem, $\hat{\sigma}_{U,d}^2$ has an advantage over $\hat{\sigma}_{J,d}^2$ and, in particular, $\hat{\sigma}_{B,d}^2$, which was computed with 999 bootstrap replications.

\subsection{Empirical coverage probability of confidence intervals for  \texorpdfstring{\boldmath $g(d)$}{g(d)}}

Here, we empirically study the coverage probabilities of confidence intervals for $g(d)$ based on the variance estimators $\hat{\sigma}_{U,d}^2, \hat{\sigma}_{L,d}^2, \hat{\sigma}_{B,d}^2$ and  $\hat{\sigma}_{J,d}^2$, using the values of $\alpha, d$ and $n_{\text{eff}}$ as in \ref{sec5-1};
hence, in this subsection, the focus is on the standard deviation instead of the variance.
Based on Theorem \ref{theorem4}, a confidence interval with asymptotic coverage probability $1-\gamma$ is given by
\begin{align*}
  \left[ \max\left\{ \hat{g}_n(d) - \frac{z_{1-\gamma/2}\, \hat\sigma_d}{ (nU_n^{(2)}(d))^{1/2}}, 0\right\},  \;
  \min\left\{ \hat{g}_n(d) + \frac{z_{1-\gamma/2}\, \hat\sigma_d}{ (nU_n^{(2)}(d))^{1/2}}, 1\right\} \right],
\end{align*}
where $z_{p}=\Phi^{-1}(p)$, and $\hat\sigma_d^2$ stands for one of the four variance estimators specified above.

The results for confidence level $1-\gamma=0.95, n_{\text{eff}}=20, d=3$ and varying $\alpha$ are given in Table \ref{tab4}. First, we note that all intervals are anticonservative, i.e. have coverage probability smaller than 0.95. Notably, the coverage probability using the first three estimators is as low as 0.90 for $\alpha=5$.
The intervals based on $\hat{\sigma}_{B,d}$ and $\hat{\sigma}_{J,d}$ behave quite similarly, the first having the edge over the second for small values of $\alpha$, and vice versa for larger values. They have slightly better empirical coverage in most cases than the interval based on $\hat{\sigma}_{U,d}$.

\begin{table}
\centering
\setlength{\tabcolsep}{12pt}
\begin{tabular}{rrrrr}
  \hline
$\alpha$ & $\hat{\sigma}_{U,d}^2$ & $\hat{\sigma}_{L,d}^2$ & $\hat{\sigma}_{B,d}^2$ & $\hat{\sigma}_{J,d}^2$ \\
  \hline
 0.2 & 91.7 & 90.4 & 94.3 & 92.7 \\
 0.5 & 92.7 & 91.7 & 94.0 & 93.6 \\
 1.0 & 92.4 & 91.7 & 93.5 & 93.7 \\
 2.0 & 91.7 & 91.1 & 92.4 & 93.2 \\
 5.0 & 89.6 & 89.4 & 90.1 & 91.0 \\
   \hline
\end{tabular}
\setlength{\tabcolsep}{6pt}
\caption{Empirical coverage probability of 0.95-confidence intervals for $g(d)$ based on different estimators for effective sample size $n_{\text{eff}}=20, d=3$ and varying $\alpha$.} \label{tab4}
\end{table}

Table \ref{tab5} shows the results for $1-\gamma=0.95, \alpha=1, d=3$ and increasing sample sizes. For $n_{\text{eff}}=40$ or larger, all intervals seem to work sufficiently well. However, a look at Table \ref{tab6}, where $n_{\text{eff}}=40$ and $\alpha=1$, shows that the empirical coverage of the interval using $\hat{\sigma}_{L,d}$ is still between 0.91 and 0.94, whereas the other intervals take values between 0.93 and 0.95.
Hence, as in subsection \ref{sec5-1}, one should choose any estimator out of $\hat{\sigma}_{U,d}^2$, $\hat{\sigma}_{B,d}^2$ and $\hat{\sigma}_{J,d}^2$ to get reliable confidence intervals.

\begin{table}
\centering
\setlength{\tabcolsep}{12pt}
\begin{tabular}{rrrrr}
  \hline
$n_{\text{eff}}$ & $\hat{\sigma}_{U,d}^2$ & $\hat{\sigma}_{L,d}^2$ & $\hat{\sigma}_{B,d}^2$ & $\hat{\sigma}_{J,d}^2$ \\
  \hline
  10 & 88.6 & 86.8 & 91.9 & 91.3 \\
  20 & 92.6 & 91.8 & 93.5 & 93.7 \\
  40 & 94.2 & 93.9 & 94.5 & 94.8 \\
  80 & 94.6 & 94.4 & 94.8 & 94.8 \\
 160 & 94.6 & 94.5 & 94.7 & 94.7 \\
  \hline
\end{tabular}
\setlength{\tabcolsep}{6pt}
\caption{Empirical coverage probability of 0.95-confidence intervals for $g(d)$ based on different estimators for $\alpha=1, d=3$ and increasing effective sample size.} \label{tab5}
\end{table}

\begin{table}
\centering
\setlength{\tabcolsep}{12pt}
\begin{tabular}{rrrrr}
  \hline
$d$ & $\hat{\sigma}_{U,d}^2$ & $\hat{\sigma}_{L,d}^2$ & $\hat{\sigma}_{B,d}^2$ & $\hat{\sigma}_{J,d}^2$ \\
  \hline
  0 & 93.1 & 91.0 & 92.9 & 93.8 \\
  1 & 94.3 & 93.1 & 94.5 & 94.6 \\
  2 & 94.2 & 93.4 & 94.5 & 94.6 \\
  3 & 94.0 & 93.6 & 94.6 & 94.7 \\
  4 & 93.8 & 93.6 & 94.2 & 94.4 \\
  5 & 93.7 & 93.6 & 94.2 & 94.4 \\
  \hline
\end{tabular}
\setlength{\tabcolsep}{6pt}
\caption{Empirical coverage probability of 0.95-confidence intervals for $g(d)$ based on different estimators for $\alpha=1, n_{\text{eff}}=40$
and varying $d$.} \label{tab6}
\end{table}

\section{Application to daily precipitation data} \label{sec6}

In this section, we apply the new tail statistic to several data sets of daily areal and point precipitation.
Establishing a probability distribution that provides a good fit to daily precipitation depths has long been a topic of interest, in particular in the areas of stochastic precipitation models, frequency analysis of precipitation and precipitation trends related to global climate change \citep{YE:2018}. Hereby, the wet-day precipitation series is the primary series considered, while a probabilistic representation of precipitation occurrences can be separately described.
A review of the literature given by \citet{YE:2018} reveals the prominent position of the gamma distribution, which was used for daily stochastic precipitation modeling already in the early 1950s \citep{TH:1951}.
In all fields mentioned above, not only the center of the distribution has to be modeled accurately, but also the distributional tail behavior is of special importance.
For the central part of the distribution of monthly or seasonal precipitation, the gamma distribution is a reasonable probability model \citep{WI:2000}; this can be different for daily precipitation or in the distributional tails. For example, \citep{YE:2018} concludes that the gamma distribution is often a reasonable model for point wet-day series in the United States. Occasionally, however, very long series are better approximated by a kappa distribution, a rather complex model with 4 parameters.

\medskip
First, we consider daily country average precipitation in Finland and Norway from 2015 to 2019, measured in centimeters. Data is available from
\url{https://www.kaggle.com/datasets/adamwurdits/finland-norway-and-sweden-weather-data-20152019}, where also additional information can be found.
Figure \ref{fig3} shows the plots of $\hat{g}_n(d)$ together with confidence intervals for confidence level 0.95, using the variance estimator $\hat{\sigma}_{U,d}^2$. The upper panel shows the graph for Finland (omitting 22 days without precipitation, the sample size is $n=1804$), the lower panel for Norway ($n=1826$).
For Finland, the plot shows a horizontal line, roughly at 0.6, corresponding to a gamma distribution with shape parameter 0.58, thus having a longer tail than the exponential distribution.
For Norway, the plot shows a horizontal line at 0.5 for values of $d$ up to 12, corresponding to an exponential distribution, but $\hat{g}_n(d)$ decreases slightly in the tail. Therefore, a gamma model for the daily precipitation in the case of Norway is questionable.

\begin{figure}
  \centering
  \includegraphics[scale=0.7]{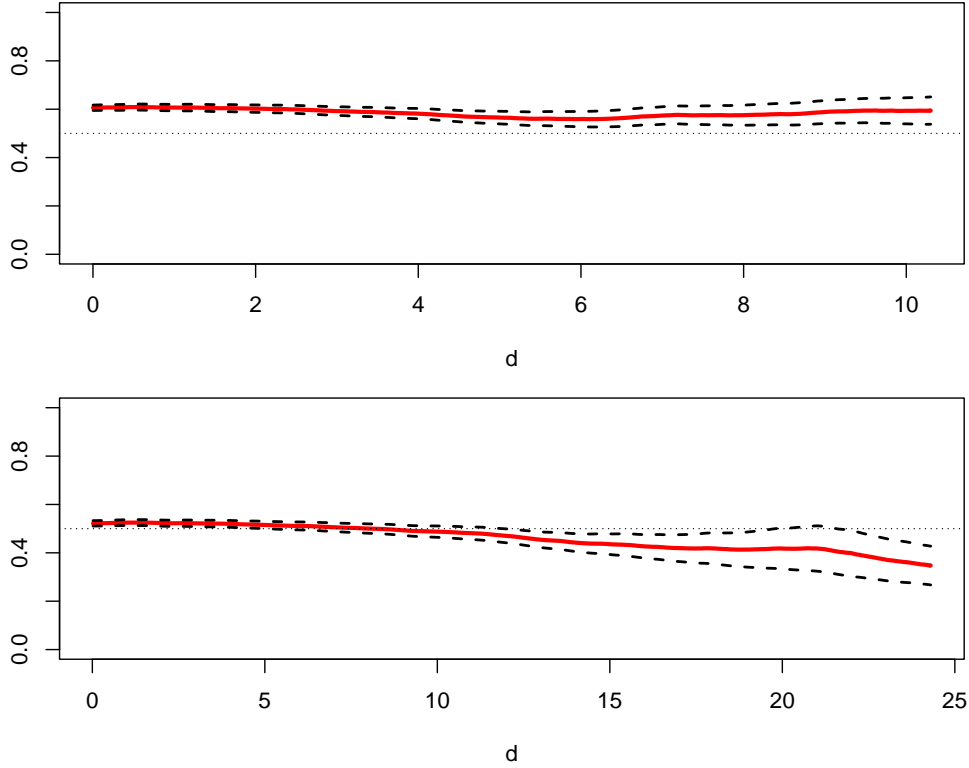}
  \caption{Graph of the estimator $\hat{g}_n(d)$ together with (pointwise) confidence limits for the daily country average precipitation in Finland (upper panel) and Norway (lower panel). \label{fig3}}
\end{figure}

\medskip
\begin{sloppypar}
Second, we analyze daily point precipitation from January 1, 2000, to December 31, 2019, at three Canadian centres, namely Calgary, Montreal and Vancouver. The datasets are subsets of longer series available under
\url{https://www.kaggle.com/datasets/aturner374/eighty-years-of-canadian-climate-data},
where further information can be found. 
The sample size, i.e. the number of wet days, is $2396, 3750$ and $3389$ for Calgary, Montreal and Vancouver, respectively.
The plot of $\hat{g}_n(d)$ with 0.95 confidence bounds for these datasets is presented in Figure \ref{fig4}. The graph for Calgary is increasing up to $d=20$; hence, a gamma distribution won't yield an adequate fit in this part of the distribution. For larger values, the graph is nearly horizontal at a value around 0.72, corresponding to a gamma distribution with shape parameter 0.31.
The graph for Montreal shows a nearly horizontal line, apart from a bend for very small values of $d$. The value of $\hat{g}_n(d)$ is 0.67 for $d=10$, which corresponds to $\alpha=0.42$.
Similarly, the graph for Vancouver is a nearly horizontal line. The value of $\hat{g}_n(10)$ is $0.56$, corresponding to $\alpha=0.73$.
Hence, for Montreal as well as Vancouver, a gamma model seems to be a good approximation in the centre and in the tail of the distribution of daily precipitation.
\end{sloppypar}

\begin{figure}
  \centering
  \includegraphics[scale=0.7]{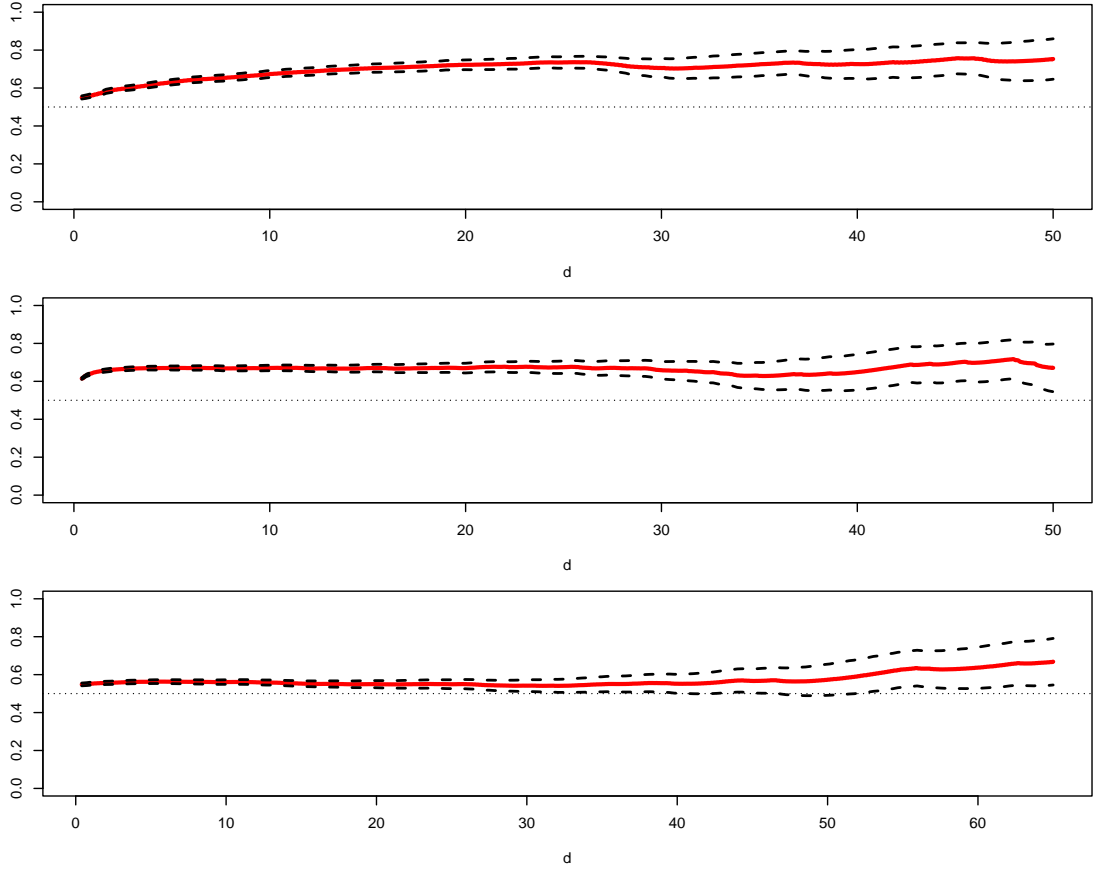}
  \caption{Graph of the estimator $\hat{g}_n(d)$ together with (pointwise) confidence limits for daily precipitation at Calgary (upper panel), Montreal (middle panel) and Vancouver (lower panel). \label{fig4}}
\end{figure}



\appendix
\section{Proofs and additional results} \label{appendix}

\begin{proposition} \label{prop}
Let $X$ and $Y$ be independent random variables, and $X \sim \Gamma(\alpha_{1}, \beta), Y \sim \Gamma(\alpha_{2}, \beta)$. Then,
\begin{eqnarray*}
\mathbb{E}\left[\frac{|X-Y|}{X+Y}\right] &=&
\dfrac{1}{2^{\alpha_{1}+\alpha_{2}}\alpha_{1} \alpha_{2}
		\mathrm{B}(\alpha_{1},\alpha_{2})}
\Big( \alpha_{1}+\alpha_{2} + (\alpha_{1}-\alpha_{2})\cdot \\
&& \quad \big( {}_{2}F_{1} \left(1,-\alpha_{1},\alpha_{2}+1;-1\right)
    - {}_{2}F_{1} \left(1,-\alpha_{2},\alpha_{1}+1;-1\right) \big)\Big), \\
\mathbb{E}\left[ \left(\frac{X-Y}{X+Y}\right)^2 \right] &=&
\left( \left(\alpha_{1}-\alpha_{2})^2\right)+\alpha_{1}+\alpha_{2} \right)
		\Gamma(\alpha_{1}+\alpha_{2})
/ \Gamma(\alpha_{1}+\alpha_{2}+2),
\end{eqnarray*}
where $\mathrm{B}(p,q)$ is the beta function, defined by
\[
\mathrm{B}(p,q) = \int_{0}^{1} x^{p-1}(1-x)^{q-1} dx,
\]
and ${}_{p} F_{q}(a_{1},\ldots,a_{p};b_{1}, \ldots,b_{q};z)$ denotes
the generalized hypergeometric function
\[
{}_{p} F_{q}(a_{1},\ldots,a_{p};b_{1}, \ldots,b_{q};z)
=\sum_{k=0}^{\infty}
\dfrac{(a_{1})_{k} \cdots (a_{p})_{k}}{(b_{1})_{k} \cdots (b_{q})_{k}}
\dfrac{z^{k}}{k!}.
\]
\end{proposition}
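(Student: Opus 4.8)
The plan is to reduce both moments to integrals against a Beta law. Since $X\sim\Gamma(\alpha_1,\beta)$ and $Y\sim\Gamma(\alpha_2,\beta)$ have a common scale, the quantity $R=(X-Y)/(X+Y)$ depends only on the ratio $X/Y$; as recorded in the proof of Proposition~\ref{g-gamma} via Lukacs' Theorem \citep{LU:1954}, $R$ is a function of $X/Y$ and independent of $X+Y$. Writing $V=X/(X+Y)\sim\mathrm{Beta}(\alpha_1,\alpha_2)$ one has $R=2V-1$, so that after integrating out the sum both expectations become Beta integrals,
\[
\Ex\!\left[\Big(\tfrac{X-Y}{X+Y}\Big)^{\!k}\right]
=\frac{1}{\mathrm{B}(\alpha_1,\alpha_2)}\int_0^1 (2v-1)^{k}\,v^{\alpha_1-1}(1-v)^{\alpha_2-1}\,dv,
\]
(with $|2v-1|$ for the first moment). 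This is the common starting point for both formulas.

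The second moment is routine. I would expand $\Ex[(2V-1)^2]=4\Ex[V^2]-4\Ex[V]+1$ using $\Ex[V]=\alpha_1/(\alpha_1+\alpha_2)$ and $\Ex[V^2]=\alpha_1(\alpha_1+1)/[(\alpha_1+\alpha_2)(\alpha_1+\alpha_2+1)]$, collect over the common denominator $(\alpha_1+\alpha_2)(\alpha_1+\alpha_2+1)$, simplify the numerator to $(\alpha_1-\alpha_2)^2+\alpha_1+\alpha_2$, and rewrite the denominator as $\Gamma(\alpha_1+\alpha_2+2)/\Gamma(\alpha_1+\alpha_2)$. This reproduces the stated expression with no further work.

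The first moment carries the real content. Here I would substitute $v=t/(1+t)$, i.e. $t=X/Y$, which turns the Beta integral into $\mathrm{B}(\alpha_1,\alpha_2)^{-1}\int_0^\infty |t-1|\,t^{\alpha_1-1}(1+t)^{-(\alpha_1+\alpha_2+1)}\,dt$; the factor $(1+t)$ in the denominator is precisely what will produce hypergeometric functions at argument $-1$. Splitting the integral at $t=1$ and mapping $[1,\infty)$ onto $[0,1]$ by $t\mapsto 1/t$, one obtains a sum of two integrals $\int_0^1(1-t)\,t^{\alpha_i-1}(1+t)^{-(\alpha_1+\alpha_2+1)}\,dt$ that are interchanged by $\alpha_1\leftrightarrow\alpha_2$, which makes the symmetry of the answer manifest. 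Each integral I would evaluate from the elementary identity $\int_0^1 t^{c-1}(1+t)^{-s}\,dt=c^{-1}\,{}_2F_1(s,c;c+1;-1)$ together with the antiderivative relation $\frac{d}{dt}\big[t^{\alpha_i}(1+t)^{-(\alpha_1+\alpha_2)}\big]=\big(\alpha_i t^{\alpha_i-1}-\alpha_j t^{\alpha_i}\big)(1+t)^{-(\alpha_1+\alpha_2+1)}$, whose boundary value at $t=1$ contributes the prefactor $2^{-(\alpha_1+\alpha_2)}$; these two facts determine each $\int_0^1(1-t)t^{\alpha_i-1}(1+t)^{-(\alpha_1+\alpha_2+1)}dt$. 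Finally I would apply Euler's transformation ${}_2F_1(a,b;c;-1)=2^{\,c-a-b}\,{}_2F_1(c-a,c-b;c;-1)$ to convert ${}_2F_1(\alpha_1+\alpha_2+1,\alpha_1;\alpha_1+1;-1)$ into $2^{-(\alpha_1+\alpha_2)}\,{}_2F_1(1,-\alpha_2;\alpha_1+1;-1)$, and symmetrically for the companion term; collecting the two pieces over $\alpha_1\alpha_2$ leaves $\alpha_1+\alpha_2+(\alpha_1-\alpha_2)\big({}_2F_1(1,-\alpha_1;\alpha_2+1;-1)-{}_2F_1(1,-\alpha_2;\alpha_1+1;-1)\big)$ inside the bracket, as required.

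I expect the hypergeometric bookkeeping to be the main obstacle: choosing the integral representation and the single Euler transformation that land exactly on the parameter pattern $(1,-\alpha_i;\alpha_j+1;-1)$ of the statement, rather than one of the many contiguous but differently parametrised equivalents, and tracking the $2^{-(\alpha_1+\alpha_2)}$ factors through both the boundary term and the transformation. Two consistency checks guide the computation: the manifest $\alpha_1\leftrightarrow\alpha_2$ symmetry, and the diagonal case $\alpha_1=\alpha_2=\alpha$, where the bracket collapses to $\alpha_1+\alpha_2=2\alpha$ and the formula must reproduce $c(\alpha)=(2^{2\alpha-1}\alpha\,\mathrm{B}(\alpha,\alpha))^{-1}$ from Remark~\ref{rem-gamma}.
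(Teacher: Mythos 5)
Your proof is correct, and I verified that its bookkeeping lands exactly on the stated formulas; the useful comparison is that it completes, by hand, the step that the paper delegates to software. The paper's own proof makes the same initial reduction: it passes to the ratio $V=X/Y$, which is beta prime, and splits $\Ex\bigl[\,|1-2/(1+V)|\,\bigr]$ at $v=1$ --- precisely the integral you reach after substituting $v=t/(1+t)$ and folding $[1,\infty)$ onto $[0,1]$ by $t\mapsto 1/t$ --- but it then concludes with ``Evaluating the integrals with the software \textit{Mathematica} yields the result'' and disposes of the second moment as an ``analogous computation''. Your replacement of that black box is sound: the identity $\int_0^1 t^{c-1}(1+t)^{-s}\,dt=c^{-1}\,{}_2F_1(s,c;c+1;-1)$ follows by expanding $(1+t)^{-s}$ and using $(c+k)^{-1}=c^{-1}(c)_k/(c+1)_k$; the exact derivative of $t^{\alpha_i}(1+t)^{-(\alpha_1+\alpha_2)}$ gives the linear relation $\alpha_i A_i-\alpha_j B_i=2^{-(\alpha_1+\alpha_2)}$ with $A_i=\int_0^1 t^{\alpha_i-1}(1+t)^{-(\alpha_1+\alpha_2+1)}\,dt$ and $B_i=\int_0^1 t^{\alpha_i}(1+t)^{-(\alpha_1+\alpha_2+1)}\,dt$, so each piece $A_i-B_i$ is determined; and Euler's transformation at argument $-1$ indeed converts ${}_2F_1(\alpha_1+\alpha_2+1,\alpha_i;\alpha_i+1;-1)$ into $2^{-(\alpha_1+\alpha_2)}\,{}_2F_1(1,-\alpha_j;\alpha_i+1;-1)$. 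Summing the two pieces gives $2^{-(\alpha_1+\alpha_2)}(\alpha_1\alpha_2)^{-1}$ times the stated bracket, as you claim, and your elementary Beta-moment computation $\Ex[(2V-1)^2]=4\Ex[V^2]-4\Ex[V]+1$ for the second moment is also correct. What each approach buys: the paper's is short but cannot be checked without computer algebra; yours is self-contained, makes the $\alpha_1\leftrightarrow\alpha_2$ symmetry manifest through the $t\mapsto 1/t$ folding, and carries explicit sanity checks (symmetry, and the diagonal case reproducing $c(\alpha)$) that the paper never states. One small remark: the invocation of Lukacs' theorem and the independence of $R$ from $X+Y$ at the start of your argument is unnecessary for this proposition --- the only fact you use is $X/(X+Y)\sim\mathrm{Beta}(\alpha_1,\alpha_2)$, which is standard gamma--beta calculus.
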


\begin{proof}
The densities of $X$ and $Y$ are
\[
f(x;\alpha_i,\beta) =  \dfrac{1}{\Gamma(\alpha_i)\beta^{\alpha_i}} x^{\alpha_i-1}\exp\left(-\dfrac{x}{\beta}\right), \quad \text{for }x>0,
\]
where $\alpha_i,\beta>0$ $(i=1,2)$. Then, $V=X/Y$ has a  beta prime distribution with parameters $\alpha_{1},\alpha_{2}$, which density function is given by
\[
g(v;\alpha_{1},\alpha_{2}) = \dfrac{1}{\mathrm{B}(\alpha_{1},\alpha_{2})}
v^{\alpha_{1}-1}(v+1)^{-(\alpha_{1}+\alpha_{2})}, \quad v>0.
\]
We have to evaluate the expectation \(\mathbb{E}[|X-Y|/(X+Y)]=\mathbb{E}[|V-1|/(V+1)]\).
Since
\[
\dfrac{|v-1|}{v+1}= \left|1-\dfrac{2}{v+1}\right|,
\]
we obtain
\begin{eqnarray*}
\lefteqn{\mathbb{E}[|V-1|/(V+1)|] \;=\; \mathbb{E}[|1-2/(1+V)|]} \\
&=&
\int_{0}^{1}
\left(
\dfrac{2}{v+1} -1
\right) g(v;\alpha_{1},\alpha_{2}) dv 
+ \int_{1}^{\infty}
\left(
1-\dfrac{2}{v+1}
\right) g(v;\alpha_{1},\alpha_{2}) dv.
\end{eqnarray*}
Evaluating the integrals with the software \textit{Mathematica} yields the result. An analogous computation yields the second moment.
\end{proof}

\section*{Disclosure statement}
No potential conflict of interest was reported by the authors.
\medskip

\section*{Acknowledgments}
We thank an anonymous reviewer for his constructive and helpful comments.

\bibliographystyle{apalike}

\end{document}